\documentclass[12pt]{article}
\usepackage{geometry}
\usepackage{setspace}
\usepackage{amssymb,amsmath, amsthm, bbm, amsfonts, eurosym, graphicx, caption, color, setspace, sectsty, comment, pdflscape, array, hyperref, subcaption, soul, xcolor, tikz-network, natbib, enumitem, algorithm, algpseudocode, float}

\usepackage{mathtools}

\hypersetup{
  colorlinks,
  citecolor= black,
  linkcolor=black,
  urlcolor=black}

\makeatletter
\newcommand{\succprec}{\mathrel{\mathpalette\succ@prec{\succ\prec}}}
\newcommand{\succ@prec}[2]{\succ@@prec#1#2}
\newcommand{\succ@@prec}[3]{%
  \vcenter{\m@th\offinterlineskip
    \sbox\z@{$#1#3$}%
    \hbox{$#1#2$}\kern-0.4\ht\z@\box\z@
  }%
}
\makeatother

\usepackage[hang, flushmargin]{footmisc}

\newtheorem{example}{Example}

\newtheorem{remark}{Remark}

\newtheorem{proposition}{Proposition}

\setcitestyle{round}

\begin{document}

\title{Very Justified Envy\thanks{I thank Bobby Pakzad-Hurson for helpful comments.}
\author{Adam Hamdan\thanks{Department of Economics, Brown University, \texttt{\href{mailto:adam_hamdan@brown.edu}{adam\_hamdan@brown.edu}}}}}
\maketitle

\begin{abstract}
    \singlespacing
    This paper studies the intensity of priority violations in matching markets. An agent is said to have \emph{very justified envy} if she prefers another agent's match to her own and her priority advantage for the object exceeds $k$ ranks. I show that unless $k$ is prohibitively large, Pareto efficiency is incompatible with the elimination of very justified envy. A simple generalization of Deferred Acceptance and Immediate Acceptance is shown to eliminate very justified envy for any $k$. In simulations of random matching markets, I find that modest relaxations to justified envy can yield significant improvements in assignment ranks.
    
    \medskip
    \noindent \textbf{Keywords:} priority-based matching, very justified envy, deferred acceptance, immediate acceptance.

    \medskip
    \noindent \textbf{JEL Codes:} C78, D47, D61.
\end{abstract}


\newpage
\section{Introduction}

In a priority-based matching problem, a set of $n$ agents must be matched to a set of $m$ objects based on agents' preference rankings over objects, and objects' priority rankings over agents. In the context of school choice markets, these priorities may reflect a student's proximity to a school, her sibling's enrollment, or her performance on an entrance exam. In public housing allocation, priority may be given on the basis of income, terminal illness, duration of homelessness, or senior citizen status. Since the work of \citet{abdulkadirouglu2003school}, the standard approach to fairness in these problems has consisted in the elimination of \emph{justified envy}: no agent should prefer another agent's match while holding higher priority for that object.

By definition, justified envy does not capture the severity of priority violations. In practice, however, different cases of justified envy can vary substantially in the extent to which the envious agent's priority is violated. A student displaced at her neighborhood school by the very last applicant in that school's priority order may be more likely to appeal her assignment than one displaced by her immediate successor. Likewise, a public housing authority may tolerate a violation between two applicants of approximately similar need, but not one involving an applicant with far stronger claims to a specific housing unit. Treating all priority violations identically therefore discards information that matters to market participants and planners alike. 

This paper analyzes the intensity of priority violations in matching markets and its implications for market design. Specifically, I propose an intuitive generalization to the concept of justified envy. An agent is said to have \emph{very justified envy} if she prefers another agent's assignment to her own, and her priority advantage for that object strictly exceeds $k$ ranks. The threshold $k$ is meant to capture the degree of priority violation deemed ``acceptable'' in the market under consideration. If the threshold is set to zero, very justified envy reverts back to standard justified envy. At the other extreme, if the threshold is equal to $n - 1$, no priority violation can be classified as very justified. For thresholds between zero and $n - 1$, one opens the door to cases in which the priority difference between agents is significant enough to make one's envy justified, but not large enough to make it ``very'' justified. With this flexible framework in hand, several facts are established.

The first main finding is that tolerating violations of bounded intensity is generally not enough to escape the well-known tension between efficiency and fairness \citep{roth1982economics, balinski1999tale, abdulkadirouglu2003school}. Specifically, unless $k \ge n-2$, it is always possible to find a problem in which no matching is both Pareto efficient and free of very justified envy (\hyperref[prop1]{Proposition 1}). Since priority differences between agents are at most $n-1$ ranks, this implies that Pareto efficiency can be guaranteed only when nearly maximal priority violations are tolerated. 

Motivated by this fact, I consider a natural family of mechanisms, termed $k$-Threshold Acceptance ($k$-TA) mechanisms, which use the threshold $k$ to determine an appropriate matching. $k$-TA makes a straightforward modification to \cite{gale1962college}'s celebrated Deferred Acceptance (DA) algorithm: at each round, a proposing agent can displace a tentative match at a given object only if her priority advantage for that object strictly exceeds $k$ ranks. When $k = 0$, the procedure becomes equivalent to the original DA algorithm. When $k = n - 1$, no displacement is ever possible, making $k$-TA equivalent to the Immediate Acceptance (IA) or Boston mechanism. For any threshold $k$, the matching selected by $k$-TA is shown to be free of very justified envy (\hyperref[prop2]{Proposition 2}). 

I then highlight ways in which the boundaries of the $k$-TA family possess uniquely appealing properties. I find that $k$-TA is strategyproof if and only if it is equivalent to DA (\hyperref[prop3]{Proposition 3}) and is Pareto efficient if and only if it is equivalent to IA (\hyperref[prop4]{Proposition 4})---a familiar pattern among mechanism classes that generalize DA and IA \citep{chen2017chinese, ayoade2023school}. Furthermore, $k$-TA is constrained efficient among mechanisms that eliminate very justified envy if and only if it coincides with either DA or IA (\hyperref[prop5]{Proposition 5}). That is, unless $k = 0$ or $k = n - 1$, the matching selected by $k$-TA may be Pareto dominated by matchings which are free of very justified envy. Thus, while $k$-TA offers a range of intermediate mechanisms, moving away from DA and IA comes at the cost of manipulability and inefficiency.

Finally, I document the potential usefulness of the very justified envy paradigm by applying $k$-TA to simulated random matching markets, where preferences and priorities are drawn independently and uniformly at random. I find that, while the severity and volume of priority violations rises with the threshold $k$, so does the number of first-choice assignments, with most of the gains captured at early increments of $k$. Practically speaking, these findings suggest that even moderate relaxations to justified envy may deliver substantial welfare improvements over DA, with limited priority violations compared to IA.\footnote{These interpretations assume truthful preferences. See \hyperref[section4]{Section 4} for a discussion of this assumption.}

\paragraph{Related Literature.} This paper joins an extensive literature exploring alternatives to justified envy (or stability) in priority-based matching. This includes reasonable and consent-based stability \citep{kesten2004student, kesten2010school}, just assignments \citep{morrill2015making}, $\alpha$-equitability \citep{alcalde2017fair}, sticky stability \citep{afacan2017sticky}, partial stability \citep{dur2019school}, essential stability \citep{troyan2020essentially}, legality \citep{ehlers2020legal}, weak stability \citep{tang2021weak}, priority-neutrality \citep{reny2022efficient}, and justifiability \citep{ortega2026justifiable}. Of these, the closest work to mine is \cite{afacan2017sticky}, who allow justified envy to arise when an agent's desired object sits no more than $k$ preference-ranks above her current assignment. The elimination of very justified envy can therefore be seen as an object-side counterpart to their sticky-stability notion. Our respective mechanisms, however, are conceptually distinct. Since \cite{dur2019school} allow for any general correspondence of allowable priority violations, the elimination of very justified envy constitutes a special case of their partial stability framework, where a priority violation is allowable if and only if the envious agent's priority advantage is at most $k$ ranks. This also means that a constrained efficient and very-justified-envy-free matching can be obtained using the algorithm proposed in \cite{dur2019school} (see \hyperref[example3]{Example 3}). Finally, the current set of results relates to the recent analysis in \cite{ortega2026justifiable}: $k$-TA allows priority violations of at most $k$ priority-ranks, while their Just-Below-Cutoffs mechanism may violate an agent's priority by at most the number of \textit{unimprovable} agents---those agents whose assignments are weakly worse under any matching that Pareto dominates the DA matching. 

The mechanism family considered in this paper complements existing works generalizing DA and IA. \cite{chen2017chinese} study a parametric family of application-rejection mechanisms used for centralized college admission in China (which contains IA and DA at its extremes) and prove that its members can be ranked by their vulnerability to manipulation and their stability. More recently, \cite{ayoade2023school} consider priority-rank partitioned (PRP) rules, a family of student-proposing DA rules in which each school selects applicants through a choice function determined by a partition of its priority ranking and a partition of each student's preference ranking. PRP rules generalize several matching mechanisms in the literature, including those considered in \cite{chen2017chinese}, as well as DA, IA, and others. Because $k$-TA uses a window of $k$ ranks that slides with each agent rather than a fixed partition of the priority order, it is a PRP rule only at the extremes $k=0$ and $k=n-1$.

\paragraph{Outline.} The remainder of this paper is organized as follows. \hyperref[section2]{Section 2} presents a standard model of priority-based matching and describes the $k$-TA mechanism. \hyperref[section3]{Section 3} covers the main results on fairness, manipulability and efficiency.  \hyperref[section4]{Section 4} reports the simulation output. \hyperref[section5]{Section 5} concludes.

\section{Model}\label{section2}

A priority-based matching problem is a tuple $(I, S, q, P, \succ)$ consisting of a finite set of agents $I = \{i_1, \dots, i_n\}$ and a finite set of objects $S = \{s_1, \dots, s_m\}$, together with a capacity vector $q = (q_s)_{s \in S}$, where $q_s \in \{1, 2, \dots\}$ is the number of available copies (or ``seats'') of object $s$. The list $P = (P_i)_{i \in I}$ describes agents' preferences, where $P_i$ is agent $i$'s strict preferences over $S \cup \{\emptyset\}$, and $\emptyset$ stands for the option of being unassigned. Let $R_i$ denote the at-least-as-good-as relation associated with $P_i$, and say that object $s$ is \emph{acceptable} to $i$ if $s R_i \emptyset$. The list $\succ \, = (\succ_s)_{s \in S}$ describes objects' priorities, where $\succ_s$ is the complete and strict priority ranking of object $s$ over $I$. Unless stated otherwise, it is assumed that $\sum_{s \in S} q_s = n$, and $n,m\ge 4$.\footnote{The second assumption is needed for \hyperref[prop4]{Proposition 4} and \hyperref[prop5]{Proposition 5}. All other results hold for $n,m \ge 3$.}

We write $i \succ_s j$ to say that $i$ has strictly higher priority than $j$ at object $s$. It is convenient to represent priorities numerically: for each object $s \in S$, let $\mathrm{rk}_s : I \to \{1, \dots, n\}$ denote the rank function associated with $\succ_s$, defined so that $\mathrm{rk}_s(i) > \mathrm{rk}_s(j)$ if and only if $i \succ_s j$. That is, the agent with the highest priority at $s$ has rank $n$, and the agent with the lowest priority has rank $1$. For any two agents $i, j \in I$, I refer to the difference $\mathrm{rk}_s(i) - \mathrm{rk}_s(j)$ as $i$'s \emph{priority advantage} over $j$ at object $s$.

A matching is a function $\mu : I \to S \cup \{\emptyset\}$ such that $|\mu^{-1}(s)| \le q_s$ for each $s \in S$. A matching $\mu$ Pareto dominates another matching $\mu'$ if $\mu(i) P_i \mu'(i)$ for some $i \in I$ and $\mu(i) R_i \mu'(i)$ for all $i \in I$. A matching $\mu$ is \emph{Pareto efficient} if no other matching Pareto dominates it. Given a matching $\mu$, say that agent $i$ has \emph{justified envy} towards $s$ if $s P_i \mu(i)$ and either $i \succ_s j$ for some $j \in \mu^{-1}(s)$, or $|\mu^{-1}(s)| < q_s$.

To account for the severity of justified envy cases, the planner specifies a priority threshold $k \in \{0, 1, \dots, n - 1\}$. Say that agent $i$ has \emph{very justified envy} towards $s$ if $s P_i \mu(i)$ and either $\mathrm{rk}_s(i) - \mathrm{rk}_s(j) > k$ for some $j \in \mu^{-1}(s)$, or $|\mu^{-1}(s)| < q_s$; that is, $i$ prefers $s$ to her own assignment and either her priority advantage over some agent assigned to $s$ strictly exceeds the threshold $k$, or $s$ has an unfilled seat. A matching $\mu$ is \emph{free of very justified envy} if no agent has very justified envy under $\mu$. Standard justified envy corresponds to the case where $k = 0$.

A mechanism $\varphi$ selects a matching for each problem $(I, S, q, P, \succ)$, where $\varphi(P, \succ)$ denotes the matching $\mu$ selected by $\varphi$, and $\varphi_i(P, \succ)$ denotes the object assigned to agent $i$ under $\varphi$. A mechanism is Pareto efficient if, for any problem, it selects Pareto efficient matchings. A mechanism is free of justified envy if, for any problem, it selects matchings with no cases of justified envy. Given a threshold $k$, a mechanism $\varphi$ is free of very justified envy if, for any problem, it selects matchings that are free of very justified envy. Lastly, a mechanism $\varphi$ is \emph{strategyproof} if there is no problem featuring an agent $i$ and two preference rankings $P_i, \tilde{P_i}$ such that $\varphi_i(\tilde{P_i}, P_{-i}, \succ) P_i \varphi_i(P_i, P_{-i}, \succ)$.

I now introduce the class of $k$-Threshold Acceptance ($k$-TA) mechanisms, which proceed identically to DA except for the displacement rule used in each round. A formal description of $k$-TA (for a fixed $k$) is given below.

\begin{description}
\item[Round 1:] Every agent proposes to her favorite acceptable object. Every object $s$ tentatively assigns its seats to its proposers following its priority order $\succ_s$. Any remaining proposers are rejected.

\item In general, at

\item[Round $t\ge2$:] Every agent who is not tentatively matched proposes to her favorite acceptable object among those she has not previously proposed to. Every object $s$ first tentatively assigns any unoccupied seats to its proposers, one at a time following $\succ_s$. If $s$ is at full capacity and proposers remain, it considers its remaining proposers in decreasing priority order as follows. Let $i$ be the highest priority proposer, and let $j$ be the tentatively assigned agent with the lowest rank $\mathrm{rk}_s(j)$. If $\mathrm{rk}_s(i) > \mathrm{rk}_s(j) + k$, then $s$ rejects $j$, tentatively assigns the seat to $i$, and moves on to the next proposer. If instead $\mathrm{rk}_s(i) \le \mathrm{rk}_s(j) + k$, then $s$ rejects $i$ and all remaining proposers, since every remaining proposer $i'$ satisfies $\mathrm{rk}_s(i') < \mathrm{rk}_s(i)$ and every other tentatively assigned agent $j'$ satisfies $\mathrm{rk}_s(j') > \mathrm{rk}_s(j)$. 

\item The algorithm ends when no more proposals are made.
\end{description}

The next remark collects the fact that DA and IA are special instances of $k$-TA.

\begin{remark}
    $k$-TA nests the two most widely-used school choice mechanisms. For the smallest threshold, $k = 0$, $k$-TA coincides with the Deferred Acceptance mechanism \citep{gale1962college}, which has been used in cities like Denver and New York. For the largest threshold, $k = n - 1$, $k$-TA coincides with the Immediate Acceptance mechanism \citep{abdulkadirouglu2003school}, which has been used in cities like Cambridge and Providence.
\end{remark}

This section ends with a simple example illustrating how $k$-TA works for different threshold levels.

\begin{example}\label{example1}
Let $I = \{i_1, \dots, i_5\}$, $S = \{s_1,\dots, s_5\}$, and $q_s = 1$ for all $s \in S$. Preferences and priorities are as follows:
\begin{center}
\begin{tabular}{c c c c c | c c c c c}
    $P_{i_1}$ & $P_{i_2}$ & $P_{i_3}$ & $P_{i_4}$ & $P_{i_5}$ & $\succ_{s_1}$ & $\succ_{s_2}$ & $\succ_{s_3}$ & $\succ_{s_4}$ & $\succ_{s_5}$ \\
    \hline
    $s_1$ & $s_1$ & $s_2$ & $s_2$ & $s_3$ & $i_4$ & $i_2$ & $i_1$ & $i_3$ & $i_5$ \\
    $s_2$ & $s_3$ & $s_4$ & $s_1$ & $s_5$ & $i_5$ & $i_3$ & $i_5$ & $i_1$ & $i_1$ \\
    $s_3$ & $s_5$ & $s_3$ & $s_4$ & $s_2$ & $i_3$ & $i_4$ & $i_3$ & $i_5$ & $i_3$ \\
    $s_5$ & $s_2$ & $s_1$ & $s_3$ & $s_1$ & $i_1$ & $i_5$ & $i_4$ & $i_2$ & $i_4$ \\
    $s_4$ & $s_4$ & $s_5$ & $s_5$ & $s_4$ & $i_2$ & $i_1$ & $i_2$ & $i_4$ & $i_2$ \\
\end{tabular}
\end{center}
Consider applying $k$-TA with three different thresholds, $k \in \{0, 2, 4\}$. In case of multiple proposers in any given round, the agent who is tentatively matched at the end of that round is underlined. Applying $0$-TA (DA) to this example yields the following round-by-round process:
\begin{center}
\begin{tabular}{c | c c c c c}
    Round & $s_1$ & $s_2$ & $s_3$ & $s_4$ & $s_5$ \\
    \hline
    1 & $\underline{i_1}, \, i_2$ & $\underline{i_3}, \, i_4$ & $i_5$ & &  \\
    2 & $i_1, \, \underline{i_4}$ & $i_3$ & $i_2, \, \underline{i_5}$ & &  \\
    3 & $i_4$ & $i_1, \, \underline{i_3}$ & $i_5$ & & $i_2$ \\
    4 & $i_4$ & $i_3$ & $\underline{i_1}, \, i_5$ & & $i_2$ \\
    5 & $i_4$ & $i_3$ & $i_1$ & & $i_2, \, \underline{i_5}$ \\
    6 & $i_4$ & $\underline{i_2}, \, i_3$ & $i_1$ & & $i_5$ \\
    7 & $i_4$ & $i_2$ & $i_1$ & $i_3$ & $i_5$ \\
\end{tabular}
\end{center}
The process unfolds according to the standard DA rules: in each round, agents propose to their favorite object (among those they have not previously proposed to) and objects tentatively hold on to proposers in order of priority. The final matching is:
\[
\mu^{0\text{-}\mathrm{TA}}
=\begin{pmatrix} i_1 & i_2 & i_3 & i_4 & i_5\\
s_3 & s_2 & s_4 & s_1 & s_5\end{pmatrix}
\]

To see how increasing the threshold $k$ changes the outcome of $k$-TA, consider applying $2$-TA to the example above. This results in the following process:
\begin{center}
\begin{tabular}{c | c c c c c}
    Round & $s_1$ & $s_2$ & $s_3$ & $s_4$ & $s_5$ \\
    \hline
    1 & $\underline{i_1}, \, i_2$ & $\underline{i_3}, \, i_4$ & $i_5$ & &  \\
    2 & $i_1, \, \underline{i_4}$ & $i_3$ & $i_2, \, \underline{i_5}$ & &  \\
    3 & $i_4$ & $i_1, \, \underline{i_3}$ & $i_5$ & & $i_2$ \\
    4 & $i_4$ & $i_3$ & $i_1, \, \underline{i_5}$ & & $i_2$ \\
    5 & $i_4$ & $i_3$ & $i_5$ & & $\underline{i_1}, \, i_2$ \\
    6 & $i_4$ & $i_2, \, \underline{i_3}$ & $i_5$ & & $i_1$ \\
    7 & $i_4$ & $i_3$ & $i_5$ & $i_2$ & $i_1$ \\
\end{tabular}
\end{center}
The first round of proposals and tentative matches under $2$-TA proceeds identically as in $0$-TA (indeed, the first round proceeds identically under any $k$-TA mechanism). In round 2 onward, any agent attempting to match with an occupied object will need a sufficiently large priority advantage to displace a tentatively matched agent. The first place where this takes effect is in determining $i_1$'s outcome in round 4: despite having a higher priority than $i_5$ at $s_3$, $i_1$ is no longer able to displace this tentative match (as she did under DA) since her priority advantage $\mathrm{rk}_{s_3}(i_1) - \mathrm{rk}_{s_3}(i_5) = 5 - 4 = 1$ does not strictly exceed the threshold $k = 2$. The same reasoning explains why $i_2$ is unable to displace $i_3$ at $s_2$ in round $6$. The final matching is: 
\[
\mu^{2\text{-}\mathrm{TA}}
=\begin{pmatrix} i_1 & i_2 & i_3 & i_4 & i_5\\
s_5 & s_4 & s_2 & s_1 & s_3\end{pmatrix}
\]
Finally, applying $4$-TA (IA) to the example above results in the following process:
\begin{center}
\begin{tabular}{c | c c c c c}
    Round & $s_1$ & $s_2$ & $s_3$ & $s_4$ & $s_5$ \\
    \hline
    1 & $\underline{i_1}, \, i_2$ & $\underline{i_3}, \, i_4$ & $i_5$ & &  \\
    2 & $\underline{i_1}, \, i_4$ & $i_3$ & $i_2, \, \underline{i_5}$ & &  \\
    3 & $i_1$ & $i_3$ & $i_5$ & $i_4$ & $i_2$ \\
\end{tabular}
\end{center}
In each round, agents make proposals according to the same principles in $0$-TA and $2$-TA. Unlike both of these mechanisms, however, displacements are no longer possible, making all tentative matches final. This is seen as early as in round $2$: after $i_4$ fails to match with $s_2$ in round $1$, she is unable to match with her second choice, $s_1$, in round $2$, despite having the highest priority in $\succ_{s_1}$. The final matching is: 
\[
\mu^{4\text{-}\mathrm{TA}}
=\begin{pmatrix} i_1 & i_2 & i_3 & i_4 & i_5\\
s_1 & s_5 & s_2 & s_4 & s_3\end{pmatrix}
\]
\end{example}

\section{Results}\label{section3}

To begin the analysis, consider the following example by \cite{roth1982economics} which shows that efficiency and the elimination of justified envy may be at odds.

\begin{example}\label{example2}
Let $I = \{i_1, i_2, i_3\}$, $S = \{s_1, s_2, s_3\}$, and $q_s = 1$ for all $s \in S$. Preferences and priorities are as follows:
\begin{center}
\begin{tabular}{c c c | c c c}
    $P_{i_1}$ & $P_{i_2}$ & $P_{i_3}$ & $\succ_{s_1}$ & $\succ_{s_2}$ & $\succ_{s_3}$ \\
    \hline
    $s_2$ & $s_1$ & $s_1$ & $i_1$ & $i_2$ & $i_2$ \\
    $s_1$ & $s_2$ & $s_2$ & $i_3$ & $i_1$ & $i_1$ \\
    $s_3$ & $s_3$ & $s_3$ & $i_2$ & $i_3$ & $i_3$ \\
\end{tabular}
\end{center}
There is a unique matching $\mu$ which is free of justified envy
\[
\mu = \begin{pmatrix} i_1 & i_2 & i_3 \\ s_1 & s_2 & s_3 \end{pmatrix}
\]
However, $\mu$ is Pareto dominated by the following matching $\mu'$
\[
\mu' = \begin{pmatrix} i_1 & i_2 & i_3 \\ s_2 & s_1 & s_3 \end{pmatrix}
\]
\end{example}

\bigskip
Notice that in the example above, although the matching $\mu'$ is not free of justified envy, it is free of very justified envy for $k \ge 1$, since $i_3$ envies $i_2$'s assignment at $s_1$, but her priority advantage is $\mathrm{rk}_{s_1}(i_3) - \mathrm{rk}_{s_1}(i_2) = 1$. This suggests that there might exist a threshold $k$ that is ``small'' relative to the market size $n$ which suffices to restore Pareto efficiency at a bounded fairness cost. The first result shows that this is false: no threshold below $n-2$ guarantees, across all problems, the existence of a matching that is both Pareto efficient and free of very justified envy. Since no priority advantage can ever exceed $n-1$ ranks, the threshold needed for a general existence result is within a single rank of the maximal conceivable violation.

\begin{proposition}[\textbf{Impossibility}]\label{prop1}
Unless $k = n - 2$ or $k = n - 1$, there exists a problem for which no matching is both Pareto efficient and free of very justified envy.
\end{proposition}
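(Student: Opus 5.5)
The plan is to prove Proposition~\ref{prop1} by an explicit construction: for every $k \le n-3$, build one problem that has no matching which is both Pareto efficient and free of very justified envy. The problem will consist of a small ``core'' generalizing the Roth example (Example~\ref{example2}), together with $n-3$ filler agents. The fillers do two jobs: they are inert, so they cannot enter any Pareto improvement, and they occupy priority ranks, so that the priority gaps inside the core can be stretched to exceed $k$. The hypothesis $k \le n-3$ enters only here, since it is exactly what leaves room for a gap of $k+1$ or more.

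First, the padding. Take core agents $i_1,i_2,i_3$ and core objects $s_1,s_2,s_3$, and add fillers $f_1,\dots,f_{n-3}$ and objects $t_1,\dots,t_{n-3}$, all with unit capacity.
\begin{itemize}[label={}, leftmargin=*]
\item Filler $f_\ell$ finds only $t_\ell$ acceptable, and core agents find no $t_\ell$ acceptable.
\item Every $f_\ell$ has top priority at $t_\ell$.
\end{itemize}
I will show that in any matching that is both Pareto efficient and free of very justified envy:
\begin{itemize}[label={}, leftmargin=*]
\item each $f_\ell$ holds $t_\ell$, because otherwise handing $t_\ell$ to $f_\ell$ and unassigning whoever holds it is a weak improvement for everyone;
\item no core agent holds an unacceptable object, for the same reason;
\item no core agent is unassigned while an acceptable $s_j$ has an empty seat, because that is very justified envy.
\end{itemize}
This reduces the problem to the permutations of $\{s_1,s_2,s_3\}$ among $\{i_1,i_2,i_3\}$.

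Second, the core. Keep the Roth preferences: $i_1: s_2,s_1,s_3$, $i_2: s_1,s_2,s_3$, $i_3: s_1,s_2,s_3$. I then list the Pareto efficient permutations; the Roth-stable one, $(s_1,s_2,s_3)$, is dominated and drops out. For each remaining permutation I identify one envy pair and place priorities so that the envious agent's rank advantage is at least $k+1$, filling the intermediate ranks with fillers. For example:
\begin{itemize}[label={}, leftmargin=*]
\item For $(s_2,s1,s_3)$, with $i_3$ envying $i_2$ at $s_1$, set $\mathrm{rk}_{s_1}(i_3)-\mathrm{rk}_{s_1}(i_2)\ge k+1$.
\item For $(s_2,s_3,s_1)$, with $i_2$ envying $i_1$ at $s_2$, set $\mathrm{rk}_{s_2}(i_2)-\mathrm{rk}_{s_2}(i_1)\ge k+1$.
\item For $(s_3,s_2,s_1)$, I need a violation such as $i_1$ envying $i_3$ at $s_1$.
\end{itemize}
The remaining permutations are handled the same way.

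The main obstacle is that these gap requirements conflict. The Roth structure needs $i_1 \succ_{s_1} i_3 \succ_{s_1} i_2$, so that $(s_1,s_2,s_3)$ is the only justified-envy-free matching. Killing $(s_3,s_2,s_1)$ through $i_1$ versus $i_3$ at $s_1$ would then need two stacked gaps of $k+1$ at the same object, which fails when $k$ is close to $n-3$. My quick check confirms this: with the Roth priorities merely stretched, $(s_3,s_2,s_1)$ is Pareto efficient and has no very justified envy. So the core preferences must be redesigned so that each Pareto efficient permutation is witnessed by a violation at a \emph{different} object, letting each object carry only one large gap. What I will try first is to use $s_3$ actively. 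Make $s_3$ acceptable and second-ranked for some core agent, give that agent top priority at $s_3$, and give the agent who lands on $s_3$ in $(s_3,s_2,s_1)$-type matchings the lowest rank there. The witness for that matching then becomes a gap of at least $k+1$ at $s_3$, which is always achievable when $k\le n-3$.

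Finally, I will close with a case-by-case verification, over all six permutations of the core objects, that each is either Pareto dominated or exhibits a priority advantage strictly exceeding $k$. If the three-agent core cannot be made to work, my fallback is a four-agent core with $n-4$ fillers, which is permitted since $n\ge 4$ by standing assumption. The extra agent gives one more object on which to place a large gap.
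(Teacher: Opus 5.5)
There is a genuine gap: the proposal never exhibits a problem. The core is the entire content of the impossibility, and you leave it at ``what I will try first'' plus a fallback, after your own check shows the first attempt fails. Your padding and your reduction to permutations of the core are sound.

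Your diagnosis of the failure is also off. The conflict is not caused by the Roth preferences. It is caused by your choice of witness for $(s_2,s_1,s_3)$: taking $i_3$ versus $i_2$ at $s_1$ forces the chain $i_1\succ_{s_1}i_3\succ_{s_1}i_2$ with two large gaps stacked at one object. Witness that permutation instead by $i_3$ envying $i_1$ at $s_2$. Then at each object the large gaps share an endpoint ($i_1$ on top at $s_1$, $i_1$ at the bottom at $s_2$), and nothing stacks. Your other two witnesses can stay as they are.

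Concretely, keep your preferences and set $\mathrm{rk}_{s_1}(i_1)=n$, $\mathrm{rk}_{s_1}(i_3)=2$, $\mathrm{rk}_{s_1}(i_2)=1$, and $\mathrm{rk}_{s_2}(i_2)=n$, $\mathrm{rk}_{s_2}(i_3)=n-1$, $\mathrm{rk}_{s_2}(i_1)=1$. Put the fillers in the intermediate ranks and leave $\succ_{s_3}$ arbitrary.
\begin{itemize}
\item The permutations $(s_1,s_2,s_3)$ and $(s_1,s_3,s_2)$ are Pareto dominated by $(s_2,s_1,s_3)$ and $(s_2,s_3,s_1)$, respectively.
\item The four Pareto efficient permutations have the following witnesses, all with advantage exceeding $k\le n-3$:
\begin{itemize}
\item $(s_2,s_1,s_3)$: $i_3$ over $i_1$ at $s_2$, advantage $n-2$.
\item $(s_2,s_3,s_1)$: $i_2$ over $i_1$ at $s_2$, advantage $n-1$.
\item $(s_3,s_1,s_2)$: $i_1$ over $i_2$ at $s_1$, advantage $n-1$.
\item $(s_3,s_2,s_1)$: $i_1$ over $i_3$ at $s_1$, advantage $n-2$.
\end{itemize}
\end{itemize}
So you need no redesign of preferences and no four-agent core. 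You also need no active role for $s_3$; being everyone's last choice, it can never carry a witness anyway.

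This is exactly the paper's construction up to relabeling: the paper's $i_3,i_1,i_2$ are your $i_1,i_2,i_3$. The paper simply drops $s_3$ from the core, so three agents compete for two objects and one is left unassigned. Being unassigned plays the role of your universal last choice $s_3$. The paper's agents $i_\ell$ with sole acceptable object $s_\ell$ are your fillers.

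As a side remark, the paper's proof also shows the bound is tight. For $k\in\{n-2,n-1\}$, a suitably ordered serial dictatorship is Pareto efficient and free of very justified envy. Your proposal does not address this direction.
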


\begin{proof}
    Fix $k < n - 2$. Let $q_s=1$ for all $s\in S$ and consider the preference profile
    \[
    P_{i_1}: s_1,\,s_2 \qquad P_{i_2}: s_1,\,s_2 \qquad P_{i_3}: s_2,\,s_1 \qquad P_{i_\ell}: s_\ell \ \ \forall \ell = 4,\dots, n
    \]
    where unlisted objects are unacceptable, together with priorities
\[
\succ_{s_1}:\ i_3,\,\dots,\,i_2,\,i_1, \qquad \succ_{s_2}:\ i_1,\,i_2,\,\dots,\,i_3,
\]
so that $\mathrm{rk}_{s_1}(i_3)=n$, $\mathrm{rk}_{s_1}(i_2)=2$, $\mathrm{rk}_{s_1}(i_1)=1$, and $\mathrm{rk}_{s_2}(i_1)=n$, $\mathrm{rk}_{s_2}(i_2)=n-1$, $\mathrm{rk}_{s_2}(i_3)=1$ and all other priorities $\succ_{s_\ell}$ are arbitrary for $\ell \ge 3$.

Since $s_\ell$ is the unique acceptable object for $i_\ell$ and $i_\ell$ the unique agent for whom $s_\ell$ is acceptable for $\ell\ge4$, then every Pareto efficient matching assigns $\mu(i_\ell)=s_\ell$. Since $i_1,i_2,i_3$ all find both $s_1,s_2$ acceptable and no other object acceptable, Pareto efficiency further requires that $s_1,s_2$ both be assigned within $\{i_1,i_2,i_3\}$, leaving one agent unmatched. Among the six ways to assign $s_1,s_2$, only four are Pareto efficient:
\[
\hat\mu = \begin{pmatrix} i_1 & i_2 & i_3 \\ s_1 & s_2 & \emptyset \end{pmatrix}, \quad
\check\mu = \begin{pmatrix} i_1 & i_2 & i_3 \\ s_1 & \emptyset & s_2 \end{pmatrix}, \quad
\tilde\mu = \begin{pmatrix} i_1 & i_2 & i_3 \\ s_2 & s_1 & \emptyset \end{pmatrix}, \quad
\bar\mu = \begin{pmatrix} i_1 & i_2 & i_3 \\ \emptyset & s_1 & s_2 \end{pmatrix}
\]
Each of these assignments features a case of very justified envy. Under $\hat\mu$, $i_3$ envies $i_1$'s match at $s_1$ with an advantage of $n-1$ ranks. Under $\check\mu$, $i_2$ envies $i_3$'s match at $s_2$ with an advantage of $n-2$ ranks. Under $\tilde\mu$, $i_3$ envies $i_2$'s match at $s_1$ with an advantage of $n-2$ ranks. Under $\bar\mu$, $i_1$ envies $i_3$'s seat at $s_2$ with an advantage of $n-1$ ranks. Since $k < n-2$, every one of these advantages exceeds $k$. Therefore, all Pareto efficient matchings in this problem exhibit very justified envy.

Now let $k = n - 2$, and fix a problem $(I, S, q, P, \succ)$. We establish existence of a matching that is Pareto efficient and free of very justified envy by constructing a serial dictatorship (SD) mechanism that selects one. Recall that in SD, agents take turns in a pre-determined serial order, $\pi$, matching with their favorite acceptable object that still has remaining capacity. Since the outcome of SD is Pareto efficient for any $\pi$ \citep{abdulkadirouglu1998random}, it suffices to exhibit an ordering $\pi^*$ whose outcome is free of very justified envy for $k = n - 2$. Consider the following $\pi^*$: at each step $t \ge 1$, let the dictator $\pi^*(t)$ be one whose most-preferred acceptable object among those with remaining capacity ranks her the highest among all currently unmatched agents; once no unmatched agent finds an
available object acceptable, the remaining agents are ordered arbitrarily. Let $s(t)$ denote the object selected by the $t$-th dictator $\pi^*(t)$, and let $\mathrm{rk}_{s(t)}(\pi^*(t))$ denote the rank of this $t$-th dictator at object $s(t)$. Starting at step $1$, suppose $\mathrm{rk}_{s(1)}(\pi^*(1)) = 1$. By construction, this means that every agent sits at the bottom of her favorite object's priority ranking, which only occurs when every agent has a different favorite object, in which case, any $\pi$ results in zero cases of justified envy from this step onward. Now suppose that $\mathrm{rk}_{s(1)}(\pi^*(1)) \ge 2$. Notice that in this case, regardless of the realized preference profile, no future agent $\pi^*(t)$ (with $t > 1$) can feel very justified envy towards $s(1)$: even the highest-priority agent at $s(1)$ has a priority advantage of at most $n - 2$ ranks over $\pi^*(1)$, which is insufficient for very justified envy given $k = n - 2$. Thus, for any rank $\mathrm{rk}_{s(1)}(\pi^*(1))$, no agent feels very justified envy towards $s(1)$. Repeating this argument shows that at any step $t \ge 1$, no agent feels very justified envy towards earlier assignments $s(1), ..., s(t-1)$. Since agents can only experience justified envy towards earlier assignments in any SD mechanism, the matching associated with $\pi^*$ is both Pareto efficient and free of very justified envy. 

Finally, let $k = n - 1$. Since no priority advantage can exceed $n - 1$ ranks and since SD is non-wasteful, using SD with \emph{any} serial order yields an outcome that is both Pareto efficient and free of very justified envy.
\end{proof}

An implication of \hyperref[prop1]{Proposition 1} is that it is generally impossible to design a mechanism that is both Pareto efficient and free of very justified envy. If one insists on satisfying the latter of these objectives, the next result states that $k$-TA is a suitable candidate.

\begin{proposition}[\textbf{Elimination of very justified envy}]\label{prop2}
    $k$-TA is free of very justified envy given any threshold $k \in \{0, \dots, n-1\}$.
\end{proposition}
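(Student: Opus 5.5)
We argue by contradiction, using two monotonicity facts about $k$-TA that mirror the standard proof that DA eliminates justified envy. Let $\mu$ be the matching selected by $k$-TA. Suppose some agent $i$ has very justified envy towards $s$, so $s P_i \mu(i)$. Since the algorithm lets every unmatched agent keep proposing down her list of acceptable objects, and $s$ is acceptable to $i$ (because $s P_i \mu(i) R_i \emptyset$), $i$ must have proposed to $s$ at some round $t$ and been rejected then or later. We split into the two clauses in the definition of very justified envy.

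\emph{Key invariant.} The central step is a monotonicity property of each object's tentative assignment set. Once $s$ is full, it stays full. Moreover, the minimum rank $\min_{j \in \text{tentative}(s)} \mathrm{rk}_s(j)$ among its tentatively held agents never decreases from round to round. The reason is that a displacement replaces the lowest-ranked holder $j$ by a proposer $i'$ with $\mathrm{rk}_s(i') > \mathrm{rk}_s(j)+k \ge \mathrm{rk}_s(j)$. Removing the minimum element and inserting a larger one cannot lower the minimum. Seats are never vacated except through such displacements, so fullness persists. I would verify this carefully against the round-$t$ procedure: unoccupied seats are filled first, and then proposers are processed one at a time in decreasing priority, each comparison being against the \emph{current} lowest holder.

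\emph{Case 1: $|\mu^{-1}(s)|<q_s$.} When $i$ was rejected by $s$, $s$ was at full capacity, since rejection only occurs after all unoccupied seats have been filled. By the invariant, $s$ remains full at the end of the algorithm, which is a contradiction.

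\emph{Case 2: $\mathrm{rk}_s(i)-\mathrm{rk}_s(j)>k$ for some $j\in\mu^{-1}(s)$.} Consider the moment in some round $t'$ at which $i$ is rejected by $s$. There are two subcases. First, $i$ may be rejected as a proposer because the current lowest holder $j^*$ satisfies $\mathrm{rk}_s(i)\le \mathrm{rk}_s(j^*)+k$. This includes the situation where a higher-priority proposer already triggered the stopping rule; there, $i$ has lower rank than that proposer, so the inequality still holds for $i$. Second, $i$ may be displaced while being a holder, at a round after she was accepted. In that subcase, I claim the new lowest holder after her displacement has rank at least $\mathrm{rk}_s(i)$: she was the minimum and was replaced by someone ranked higher, so every remaining holder outranks her. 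Either way, immediately after $i$'s rejection, the minimum holder rank $r$ satisfies $\mathrm{rk}_s(i)\le r+k$. By the invariant, every agent in $\mu^{-1}(s)$ has rank at least $r$, hence at least $\mathrm{rk}_s(i)-k$. This contradicts $\mathrm{rk}_s(i)-\mathrm{rk}_s(j)>k$.

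The main obstacle is the first subcase with sequential processing within a round. The lowest holder used in $i$'s comparison, $j^*$, is the holder at that point in processing, after earlier proposers in the same round may have displaced others. I must check that the minimum right after the stopping decision equals $\mathrm{rk}_s(j^*)$. It does, because no changes occur after the stop in that round. I must also check that the invariant applies within a round as well as across rounds. Once the invariant is stated at the granularity of single proposer-processing steps, both cases follow directly.
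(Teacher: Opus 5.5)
Your proof is correct and follows essentially the same route as the paper. Both arguments rest on the fact that the minimum holder rank at $s$ never decreases once $s$ is full, and both split on whether $i$ is rejected as a proposer or displaced as a holder. You also derive non-wastefulness from the fullness invariant, whereas the paper simply asserts it.

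One small addition: your first subcase should also cover a rejection in round~1. There $s$ just keeps its $q_s$ highest-priority proposers without any threshold comparison, so $\mathrm{rk}_s(i) < r$ holds trivially; the paper lists this case explicitly.
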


\begin{proof}
    The proof closely follows that of Theorem 1 in \cite{gale1962college}. Fix a problem $(I,S,q, P,\succ)$ and a threshold $k$, and let $\mu$ denote the matching selected by $k$-TA. Suppose $s P_i \mu(i)$ for some agent $i$ and object $s$. Since $k$-TA is non-wasteful, it cannot be that $| \mu^{-1}(s) | < q_s$. Therefore, it suffices to show that $\mathrm{rk}_s(i) - \mathrm{rk}_s(j) \le k$ for every $j$ with $\mu(j) = s$, so that $i$'s envy is not very justified.
    
    Since $i$ prefers $s$ to her match under $\mu$, she must have proposed to $s$ at some step and was eventually rejected. Note that rejections occur only once $s$ is at full capacity, and from that point on the set of $s$'s tentative matches changes only through displacements, each of which removes the incumbent with the lowest rank and replaces her with a proposer whose rank exceeds it by more than $k$. In particular, the lowest rank among $s$'s tentative matches never decreases once $s$ reaches capacity.
    
    Suppose first that $i$ is rejected upon proposing, and let $\underline{j}$ be the lowest-ranked tentative match at the end of this round. Then $\mathrm{rk}_s(i) \le \mathrm{rk}_s(\underline{j}) + k$: either every seat was assigned to a proposer ranked above $i$, or $i$ (or a proposer ranked above her) failed the threshold comparison against $\underline{j}$. Since the lowest rank at $s$ never decreases thereafter, every final match $j$ satisfies $\mathrm{rk}_s(j) \ge \mathrm{rk}_s(\underline{j}) \ge \mathrm{rk}_s(i) - k$.
    
    Suppose instead that $i$ is tentatively matched to $s$ and displaced at a later step. At that moment $i$ is the lowest-ranked incumbent, so every other incumbent, as well as the displacing proposer, has rank exceeding $\mathrm{rk}_s(i)$; and since displacements only ever replace the lowest-ranked incumbent with a higher-ranked agent, every final match $j$ satisfies $\mathrm{rk}_s(j) > \mathrm{rk}_s(i)$. In both cases, $\mathrm{rk}_s(i) \le \mathrm{rk}_s(j) + k$ for all $j$ with $\mu(j) = s$.
\end{proof}

The previous result establishes that every $k$-TA mechanism is fair from the perspective of very justified envy. The next result shows that only a single member of the $k$-TA family has the additional property of being strategyproof, and that is DA.

\begin{proposition}[\textbf{Strategyproofness}]\label{prop3}
    $k$-TA is strategyproof if and only if $k = 0$.    
\end{proposition}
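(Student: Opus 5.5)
The case $k=0$ and the case $k\ge 1$ are handled separately. For the ``if'' direction I rely on the Remark that $0$-TA coincides with Deferred Acceptance. Strategyproofness of agent-proposing DA is the classical result of Dubins and Freedman and of Roth, so nothing new is needed there. All the work is in the ``only if'' direction. For every $k\in\{1,\dots,n-1\}$ I will exhibit one problem and one agent who gains by misreporting. The construction uses only the defining feature of $k$-TA with $k\ge1$: an agent with a priority advantage of exactly one rank cannot displace an incumbent.

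Here is the construction. Let $q_s=1$ for all $s$, and take preferences
\[
P_{i_1}: s_1,\,s_2,\,s_3 \qquad P_{i_2}: s_2,\,s_3 \qquad P_{i_3}: s_1 \qquad P_{i_\ell}: s_\ell \ \ \forall \ell=4,\dots,n,
\]
where unlisted objects are unacceptable. The priorities are chosen so that $i_3\succ_{s_1} i_1$, and so that at $s_2$ agent $i_1$ is ranked first ($\mathrm{rk}_{s_2}(i_1)=n$) and $i_2$ second ($\mathrm{rk}_{s_2}(i_2)=n-1$). All other priorities are arbitrary.

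I trace $k$-TA under truthful reports.
\begin{itemize}
\item[] Round 1: $i_1$ and $i_3$ both propose to $s_1$, and $s_1$ keeps $i_3$ by priority. Agent $i_2$ takes $s_2$, and each $i_\ell$ takes $s_\ell$.
\item[] Round 2: $i_1$ proposes to $s_2$, which is occupied by $i_2$. Her advantage is $\mathrm{rk}_{s_2}(i_1)-\mathrm{rk}_{s_2}(i_2)=1\le k$, so she is rejected.
\item[] Round 3: $i_1$ proposes to the empty object $s_3$ and is assigned it. No further proposals occur.
\end{itemize}
So $\varphi_{i_1}(P,\succ)=s_3$.

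Now let $i_1$ report $\tilde P_{i_1}: s_2, s_1, s_3$. In round 1, $i_1$ and $i_2$ both propose to $s_2$, and round 1 assigns seats purely by priority. Hence $i_1$ obtains $s_2$ and keeps it: she is top-ranked at $s_2$, so no later proposer can displace her. Since $s_2\,P_{i_1}\,s_3$, agent $i_1$ strictly gains, and $k$-TA is not strategyproof for any $k\ge1$.

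The main obstacle is not the logic but the bookkeeping. I must check that the example respects the paper's standing assumptions: $\sum_s q_s=n$, which holds because there are $n$ unit-capacity objects, and $n,m\ge 4$ (or $\ge3$). I must also confirm that the truthful run really ends with $i_1$ at $s_3$, which means checking that nobody else ever proposes to $s_3$ under the truthful profile. That holds because $i_2$ is never rejected from $s_2$, and $s_3$ appears in no other list. Finally, the argument has to be uniform in $k$. The only comparison in which $k$ enters is the one-rank gap at $s_2$, so it works for all $1\le k\le n-1$, including the Immediate Acceptance case $k=n-1$.
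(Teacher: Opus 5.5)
Your proof is correct and uses the same mechanism as the paper's. The manipulator is rejected at her top choice. She then cannot displace the incumbent at her second choice, whom she outranks by at most $k$ ranks. Ranking that object first instead secures it in round 1, where seats are allocated purely by priority. The differences are minor: you set the priority gap at one rank rather than exactly $k$, and you use unit capacities. This lets one problem serve every $k \ge 1$ and avoids the paper's separate caveat about $q_{s_2}+k\le n$.
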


\begin{proof}
    The ``if'' direction follows from the fact that DA is strategyproof \citep{dubins1981machiavelli, roth1982economics}. The ``only if'' direction proceeds as follows. Fix a threshold $k \ge 1$. Let agent $i$'s true preferences be $s_1 P_i s_2$, with all other objects unacceptable. Let $I_1 \subset I$ consist of $q_{s_1}$ agents whose sole acceptable object is $s_1$, and $I_2 = I_2' \cup \{j\} \subset I$ consist of $q_{s_2}$ agents whose sole acceptable object is $s_2$. Let priorities be such that every agent in $I_1$ outranks $i$ at $s_1$, every agent in $I_2'$ outranks $i$ at $s_2$, every agent in $I \setminus \left( I_2 \cup \ \{i\} \right)$ has lower priority than $i$ at $s_2$, and $i$ outranks $j$ at $s_2$ by exactly $k$ ranks, so that $\mathrm{rk}_{s_2}(i) = \mathrm{rk}_{s_2}(j) + k$. 

    Under truthful reporting, $i$ and all agents in $I_1$ propose to $s_1$ in round 1. Since $I_1$ exactly fills $s_1$'s $q_{s_1}$ seats with agents who outrank $i$, $i$ is rejected. Simultaneously, the set $I_2$ exactly fills $s_2$. In round 2, $i$ proposes to $s_2$, now at capacity, and is compared against the lowest-ranked incumbent, $j$. Since $\mathrm{rk}_{s_2}(i) =  \mathrm{rk}_{s_2}(j) + k \not> \mathrm{rk}_{s_2}(j) + k$, $i$ fails to displace $j$ and, having exhausted her preferences, remains unmatched.

    Now suppose that $i$ misrepresents her preferences and reports $s_2 \tilde{P_i} s_1$ instead. In round 1, $i$ proposes to $s_2$ alongside $I_2$. Since $i$ outranks $j$, she obtains a seat at $s_2$, while $j$ is rejected. Moreover, since $i$ is among the $q_{s_2}$ highest-priority agents at $s_2$, and all higher-priority agents at $s_2$ have also secured a seat at $s_2$, $i$ cannot be displaced by a proposer in future rounds. Since $i$'s final match is $s_2$, a strict improvement over being unmatched, truthful revelation is not a dominant strategy for $i$ when $k \ge 1$. 
    
    Note that the construction above requires $q_{s_2} + k \le n$ to hold. For $k > n - q_{s_2}$, one can repeat the argument with $q_{s_1} = q_{s_2} = 1$ and $I_2' = \emptyset$, which holds for every $k \le n - 1$.
\end{proof}

A well-known weakness of DA is that it can produce highly inefficient outcomes \citep{abdulkadirouglu2003school, kesten2010school}. It is therefore useful to ask which members of the $k$-TA family are guaranteed to select Pareto efficient matchings. The next result shows that this holds if and only if $k$-TA coincides with IA.

\begin{proposition}[\textbf{Efficiency}]\label{prop4}
    $k$-TA is Pareto efficient if and only if $k = n - 1$.
\end{proposition}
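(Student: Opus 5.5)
The claim has two directions, and I will prove them separately.

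\emph{The ``if'' direction.} When $k=n-1$, the remark above says $k$-TA coincides with IA. So I need to show that IA selects Pareto efficient matchings. I will argue directly rather than cite the literature.

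Let $\mu$ be the IA outcome, and suppose some $\nu$ Pareto dominates it. Let $t$ be the earliest round in which some agent $i$ with $\nu(i)P_i\mu(i)$ proposes to her object $s=\nu(i)$ under IA. Since $sP_i\mu(i)$, she was rejected there, so $s$ was filled in round $t$ or earlier. No displacement ever occurs, so every agent holding $s$ at the end of round $t$ keeps it, and each such agent $j$ proposed to $s$ in some round $t'\le t$.

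Under $\nu$ at least one of these agents $j$ must lose $s$, since $i$ now occupies a seat. Because $\nu(j)R_j\mu(j)=s$ and $\nu(j)\neq s$, agent $j$ strictly improves. Her new object $\nu(j)$ is ranked above $s$, so she proposed to it in some round strictly before $t'\le t$. This contradicts the minimality of $t$.

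There is one gap: $i$ may be unmatched with no proposal to $\nu(i)$. That case does not arise, because $\nu(i)$ acceptable and preferred to $\mu(i)$ means she proposed to it. I will write the argument with the first such proposal round.

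\emph{The ``only if'' direction.} Fix $1\le k\le n-2$ (the case $k=0$ is DA, and Example~\ref{example2} already shows DA can be Pareto dominated). The task is to exhibit a problem where $k$-TA is inefficient. I will modify the DA-inefficiency cycle of Example~\ref{example2} so that the displacement creating the rejection chain still clears the threshold $k$.

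Use unit capacities and three key agents:
\[
P_{i_1}: s_2, s_1, \qquad P_{i_2}: s_1, s_2, \qquad P_{i_3}: s_1, \dots
\]
The intended run goes as follows. In round 1, $i_2$ beats $i_3$ at $s_1$, and $i_1$ takes $s_2$. Then $i_3$ proposes to $s_2$ and displaces $i_1$. Next $i_1$ proposes to $s_1$ and displaces $i_2$. Finally $i_2$ goes to $s_2$ and is rejected, or goes elsewhere. The ending has $i_1$ at $s_1$ and $i_3$ at $s_2$, and this is dominated by swapping $i_1$ and $i_2$ if $i_2$ ends up worse off.

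To make the displacements clear the threshold $k$, I set the priorities as follows:
\[
\mathrm{rk}_{s_2}(i_3)=n,\quad \mathrm{rk}_{s_2}(i_1)=1,\qquad \mathrm{rk}_{s_1}(i_1)=n,\quad \mathrm{rk}_{s_1}(i_2)=n-1,\quad \mathrm{rk}_{s_1}(i_3)=1.
\]
This gives advantages of $n-1>k$ for $i_3$ over $i_1$ at $s_2$, and of $1$ for $i_1$ over $i_2$ at $s_1$. The second advantage fails the threshold for $k\ge1$, so I need to rearrange the chain so that the displacing advantages are $n-1$ while the ones the swap needs stay small.

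A cleaner route is to lean on the structure of the proof of Proposition~\ref{prop1}. I will build a problem in which $k$-TA outputs something like $\check\mu$ or $\hat\mu$ with an agent holding a dominated assignment. The filler agents $i_4,\dots,i_n$ take their own objects $s_\ell$, which pins $i_1,i_2,i_3$ to priority ranks $1$, $n-1$ or $2$, and $n$ as needed. I will use $m\ge4$ and $n\ge4$ to place filler agents so that one agent rejected at $s_1$ is pushed into a dominated outcome relative to a swap among the three.

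\emph{Main obstacle.} The hard step is calibrating the priorities. Every displacement along the rejection chain needs an advantage above $k$, which I will get with ranks $n$ against $1$. The resulting matching must still be Pareto dominated by a pairwise swap. The swap is allowed to create a violation, and I expect that violation to be moderate, so the threshold does not block it.
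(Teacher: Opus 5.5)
Your ``if'' direction is correct: it is the standard direct proof that IA is Pareto efficient, where the paper simply cites this fact. The ``only if'' direction, however, is a search plan rather than a proof, since it gives no valid instance for any $k\ge 1$.

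For $k<n-2$ you do not need to trace the algorithm at all. In the problem from the proof of Proposition 1, every Pareto efficient matching has very justified envy. By Proposition 2 the $k$-TA matching has none, so it is inefficient. Your idea of making $k$-TA ``output something like $\check\mu$ or $\hat\mu$'' inverts this logic: those are Pareto efficient matchings of that problem, which $k$-TA cannot select when $k<n-2$. Your first chain can be repaired in this range. Take $\mathrm{rk}_{s_1}(i_1)=n$, $\mathrm{rk}_{s_1}(i_2)=2$, $\mathrm{rk}_{s_1}(i_3)=1$ instead of $\mathrm{rk}_{s_1}(i_2)=n-1$. After relabeling agents, this is exactly that instance. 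It ends with $i_1$ at $s_1$, $i_3$ at $s_2$ and $i_2$ unmatched. The dominating swap is between $i_1$ and $i_3$, not between $i_1$ and $i_2$.

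The boundary $k=n-2$ is where the real work lies, and neither route works there. Proposition 1 gives no impossibility at $k=n-2$; its proof shows an efficient matching free of very justified envy always exists. So the inefficiency must come from the path of the algorithm. Your chain is blocked structurally, not by calibration. Since $i_2$ must beat $i_3$ at $s_1$ in round 1, $\mathrm{rk}_{s_1}(i_2)\ge 2$. Hence $i_1$'s advantage over $i_2$ at $s_1$ is at most $n-2$ and never clears $k=n-2$.

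The missing idea is to let the threshold itself generate the inefficiency. A rejection with advantage exactly $1$, which is blocked for every $k\ge 1$, diverts an agent into a chain whose displacements all have the maximal advantage $n-1$. The chain ends in a mutually beneficial swap. The paper's instance is:
\begin{itemize}
\item preferences $P_{i_1}: s_1$, $P_{i_2}: s_2$, $P_{i_3}: s_1,s_2,s_3$, $P_{i_4}: s_3,s_2$, and $P_{i_\ell}: s_\ell$ for $\ell\ge 5$;
\item priorities $\succ_{s_1}: i_1,i_3,\dots$, $\succ_{s_2}: i_4,\dots,i_3,i_2$, and $\succ_{s_3}: i_3,\dots,i_4$.
\end{itemize}
Agent $i_3$ loses to $i_1$ at $s_1$. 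She then fails at $s_2$ (advantage $1$ over $i_2$) and displaces $i_4$ at $s_3$ (advantage $n-1$). Next, $i_4$ displaces $i_2$ at $s_2$ (advantage $n-1$). The outcome gives $s_3$ to $i_3$ and $s_2$ to $i_4$, and each prefers the other's object, so swapping them is a Pareto improvement. This single instance works for every $1\le k\le n-2$, and the paper reuses it for Proposition 5.
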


\begin{proof}
    The ``if'' direction follows from the fact that IA is Pareto efficient. For the ``only if'' direction, fix $k < n - 1$. We construct a problem in which $k$-TA selects a matching that is not Pareto efficient. It suffices to treat the boundary case $k = n - 2$. For $k < n-2$, one may invoke \hyperref[prop1]{Proposition 1} to find a problem in which no matching is both Pareto efficient and free of very justified envy; since $k$-TA is free of very justified envy by \hyperref[prop2]{Proposition 2}, its matching in such a problem is necessarily inefficient.

    Let $q_s = 1$ for all $s \in S$, and consider the following preference profile
    \[
        P_{i_1}: s_1
        \qquad
        P_{i_2}: s_2
        \qquad
        P_{i_3}: s_1, \, s_2, \, s_3
        \qquad
        P_{i_4}: s_3,\, s_2 \qquad P_{i_\ell}: s_\ell, 
    \ \forall\ell = 5, \dots, n
    \]
    together with priorities satisfying
    \[
        \succ_{s_1}: i_1,\, i_3,\, \dots
        \qquad
        \succ_{s_2}: i_4,\, \dots,\, i_3, \, i_2
        \qquad
        \succ_{s_3}: i_3,\, \dots,\, i_4
    \]
    and the remaining priorities $\succ_{s_\ell}$ for $\ell \ge 4$ are arbitrary. Notice that in the first round of $k$-TA, each agent $i_\ell$ with $\ell \ge 5$ is immediately and permanently matched to $s_\ell$, so one may restrict attention to $\{i_1, i_2, i_3, i_4\}$. The run of $k$-TA proceeds as follows. In round 1, $i_1$ and $i_3$ propose to $s_1$. Since $i_1 \succ_{s_1} i_3$, $i_3$ is rejected. Meanwhile, $i_2$ and $i_4$ tentatively match with $s_2$ and $s_3$, respectively. In round 2, $i_3$ proposes to $s_2$, but her priority advantage is not large enough, since $\mathrm{rk}_{s_2}(i_3) - \mathrm{rk}_{s_2}(i_2) = 2 - 1 = 1 \le k$, so $i_3$ is rejected. In round 3, $i_3$ proposes to $s_3$, where she is top-ranked. The incumbent, $i_4$ has priority rank $1$, which implies $\mathrm{rk}_{s_3}(i_3) - \mathrm{rk}_{s_3}(i_4) = n - 1 > k$ and so $i_3$ displaces $i_4$. In the fourth and final round, $i_4$ proposes to $s_2$, where she is top-ranked. The incumbent, $i_2$, has the lowest rank at $s_2$, which means $\mathrm{rk}_{s_2}(i_4) - \mathrm{rk}_{s_2}(i_2) = n - 1 > k$, and $i_4$ displaces $i_2$. No more proposals are made, and the final match from $k$-TA is
    \[
    \mu = \begin{pmatrix} i_1 & i_2 & i_3 & i_4 & i_5 & \dots & i_{n} \\ s_1 & \emptyset & s_3 & s_2 & s_5 & \dots & s_n \end{pmatrix}
    \]
    Since $i_3$ and $i_4$ both prefer to switch objects, $\mu$ is Pareto dominated by the matching
    \[
    \mu' = \begin{pmatrix} i_1 & i_2 & i_3 & i_4 & i_5 & \dots & i_{n} \\ s_1 & \emptyset & s_2 & s_3 & s_5 & \dots & s_n \end{pmatrix}
    \]
\end{proof}

Thus, with the exception of its boundary $k = n - 1$, $k$-TA may produce inefficient outcomes. In fact, a closer look at the proof of \hyperref[prop4]{Proposition 4} reveals a more subtle issue: for some problem instances, the matching selected by $k$-TA may be Pareto dominated by matchings which are themselves free of very justified envy. Say that a mechanism $\varphi$ is \emph{constrained efficient} in the class of mechanisms $\Phi$ if for any problem, the matching selected by $\varphi$ is not Pareto dominated by the matching selected by another mechanism $\psi \in \Phi$. The next result states that $k$-TA is constrained efficient among very-justified-envy-free mechanisms if and only if it coincides with DA, or IA. 

\begin{proposition}[\textbf{Constrained efficiency}]\label{prop5}
    $k$-TA is constrained efficient in the class of mechanisms that are free of very justified envy if and only if $k = 0$ or $k = n - 1$. 
\end{proposition}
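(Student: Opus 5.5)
The plan is to prove the two directions separately. Throughout, I read "Pareto dominated by the matching selected by another mechanism $\psi$ that is free of very justified envy" as "Pareto dominated by some matching that is free of very justified envy." Any such matching $\mu'$ for a fixed problem can be turned into a mechanism $\psi$ in the class: let $\psi$ select $\mu'$ on that problem and coincide with $k$-TA elsewhere. By \hyperref[prop2]{Proposition 2}, this $\psi$ is free of very justified envy.

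For the ``if'' direction, take $k = n-1$ first. By \hyperref[prop4]{Proposition 4}, $(n-1)$-TA (IA) is Pareto efficient, so nothing Pareto dominates its matching. Now take $k=0$. Here very justified envy is ordinary justified envy, and the mechanisms in the class are exactly those selecting matchings free of justified envy, i.e.\ stable matchings. The matchings they select are non-wasteful as well, which I will check once: an unfilled seat at an acceptable object counts as very justified envy in the paper's definition. The classical result of \cite{gale1962college} says the DA matching weakly dominates every stable matching for every agent. So if a stable $\mu'$ Pareto dominated the DA matching $\mu$, then $\mu'$ would be weakly better for everyone and strictly better for someone. The DA matching is also weakly better than $\mu'$ for everyone, so $\mu' = \mu$, which is a contradiction.

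For the ``only if'' direction, fix $1 \le k \le n-2$. I will adapt the problem from the proof of \hyperref[prop4]{Proposition 4}, keeping all preferences and the priorities at $s_1$ and $s_3$. At $s_2$, I choose priorities with $\mathrm{rk}_{s_2}(i_4)=n$, $\mathrm{rk}_{s_2}(i_2)=1$ and $\mathrm{rk}_{s_2}(i_3)=k+1$. This is consistent because $k+1\le n-1$, so $i_3$ sits strictly between $i_2$ and $i_4$. The run of $k$-TA then goes as follows.
\begin{enumerate}
\item Round 1: $i_1$ takes $s_1$ and rejects $i_3$; $i_2$ holds $s_2$ and $i_4$ holds $s_3$; each $i_\ell$ with $\ell\ge5$ is permanently matched to $s_\ell$.
\item Round 2: $i_3$ proposes to $s_2$ with advantage exactly $k \not> k$ over $i_2$, so she is rejected. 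This is the step that uses $k\ge1$: with $k=0$ she would win the seat.
\item Round 3: $i_3$ displaces $i_4$ at $s_3$, with advantage $n-1>k$.
\item Round 4: $i_4$ displaces $i_2$ at $s_2$, with advantage $n-1>k$.
\end{enumerate}
The outcome $\mu$ is therefore the same as in \hyperref[prop4]{Proposition 4}: $i_1\mapsto s_1$, $i_2\mapsto\emptyset$, $i_3\mapsto s_3$, $i_4\mapsto s_2$.

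The swap $\mu'$, which gives $s_2$ to $i_3$ and $s_3$ to $i_4$, Pareto dominates $\mu$. The main thing to check is that $\mu'$ is free of very justified envy. Under $\mu'$, $i_1$, $i_4$ and every $i_\ell$ with $\ell\ge5$ receive their first choices, so they envy no one. Agent $i_3$ envies only $s_1$, held by $i_1$, who tops $\succ_{s_1}$, so she has no priority advantage there. Agent $i_2$ envies only $s_2$, held by $i_3$, and $i_2$ has the lowest rank at $s_2$, so again there is no advantage. No object acceptable to an envious agent has an empty seat: $s_1$ and $s_2$ are filled, and $s_3$ is filled by $i_4$.

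So $\mu'$ is free of very justified envy and Pareto dominates the $k$-TA matching, which proves the claim for every $1\le k\le n-2$. The delicate points are the tuning $\mathrm{rk}_{s_2}(i_3)=k+1$, which makes $i_3$ fail at $s_2$ exactly when $k\ge1$, and the bookkeeping that no unfilled seat creates envy in $\mu'$. The hypothesis $n\ge4$ guarantees that agents $i_1,\dots,i_4$ and objects $s_1,s_2,s_3$ exist.
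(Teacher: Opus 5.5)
Your proof is correct and follows the paper's argument. For the ``if'' direction you use IA's Pareto efficiency and DA's agent-optimality among stable matchings, and for the ``only if'' direction you use the Proposition 4 problem with the swap $\mu'$, extended to a mechanism by agreeing with $k$-TA elsewhere. Your retuning of $\mathrm{rk}_{s_2}(i_3)$ to $k+1$ is harmless but unnecessary: the paper's original priorities give $i_3$ an advantage of only $1 \le k$ over $i_2$ at $s_2$, so the same instance works verbatim for every $k \in \{1,\dots,n-2\}$.
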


\begin{proof}
    The ``if'' direction follows from the fact that IA is Pareto efficient (hence, constrained efficient) while DA weakly Pareto dominates all other mechanisms that eliminate justified envy \citep{gale1962college}.

    For the ``only if'' direction, I refer to the construction used in the proof of \hyperref[prop4]{Proposition 4}. Notice that for any threshold $k \in \{1,\dots, n-2\}$, $k$-TA always selects the same matching $\mu$: the two displacements involved a priority advantage of $n - 1$, and the sole rejection was due to a priority advantage of $1$. However, $\mu$ is Pareto dominated by $\mu'$, which is free of justified envy, and therefore, free of very justified envy for any $k$. Thus, $k$-TA is not constrained efficient for this range of $k$: simply take the mechanism that selects $\mu'$ at this problem and agrees with $k$-TA everywhere else.
\end{proof}

One way towards constrained efficiency is to consider improvement cycles, where one starts with a baseline matching and executes Pareto improving cycles where possible. This is precisely the idea explored in \cite{dur2019school}. Their work shows that to satisfy any desired notion of partial stability (such as the elimination of very justified envy), it suffices to (i) start with a partially stable baseline matching, and (ii) restrict attention to Pareto improving cycles that only result in ``allowable'' priority violations. Both of these principles are infused into the Top Priority (TP) algorithm \citep{dur2019school}, which (i) uses the outcome of DA as its baseline matching, and (ii) in case of multiple allowable improvement cycles, selects the cycle featuring the envious agent with the highest priority at her desired school.\footnote{For a formal description of the TP algorithm, please refer to \cite{dur2019school}.} The next example applies TP to find a constrained efficient and very-justified-envy-free matching.

\begin{example}\label{example3}
Let $I=\{i_1,\dots,i_5\}$, $S=\{s_1,\dots,s_5\}$, and $q_s=1$ for all $s\in S$. Fix the very justified envy threshold to be $k=1$. Preferences and priorities are as follows:
\begin{center}
\begin{tabular}{c c c c c | c c c c c}
    $P_{i_1}$ & $P_{i_2}$ & $P_{i_3}$ & $P_{i_4}$ & $P_{i_5}$ & $\succ_{s_1}$ & $\succ_{s_2}$ & $\succ_{s_3}$ & $\succ_{s_4}$ & $\succ_{s_5}$ \\
    \hline
    $s_4$ & $s_4$ & $s_1$ & $s_4$ & $s_3$ & $i_3$ & $i_4$ & $i_2$ & $i_5$ & $i_2$ \\
    $s_2$ & $s_1$ & $s_4$ & $s_2$ & $s_4$ & $i_2$ & $i_1$ & $i_1$ & $i_3$ & $i_5$ \\
    $s_5$      & $s_3$ &       &       &       & $i_4$ & $i_5$ & $i_5$ & $i_4$ & $i_1$ \\
          &       &       &       &       & $i_1$ & $i_2$ & $i_3$ & $i_2$ & $i_3$ \\
          &       &       &       &       & $i_5$ & $i_3$ & $i_4$ & $i_1$ & $i_4$ \\
\end{tabular}
\end{center}
Applying $0$-TA (DA) yields the following matching: 
\[
\mu^{\mathrm{DA}}=\begin{pmatrix} i_1 & i_2 & i_3 & i_4 & i_5 \\
s_5 & s_3 & s_1 & s_2 & s_4
\end{pmatrix}
\]
Applying $1$-TA yields the following matching: 
\[
\mu^{1\text{-}\mathrm{TA}}=\begin{pmatrix} i_1 & i_2 & i_3 & i_4 & i_5\\
s_2 & s_3 & s_1 & \emptyset & s_4\end{pmatrix}
\]
Applying \cite{dur2019school}'s TP algorithm yields the following matching: 
\[
\mu^{\mathrm{TP}}=\begin{pmatrix} i_1 & i_2 & i_3 & i_4 & i_5\\
s_5 & s_4 & s_1 & s_2 & s_3\end{pmatrix}
\]
The matching $\mu^{\mathrm{TP}}$ is obtained as follows. Let $\mu^{\mathrm{DA}}$ be the baseline matching. The only Pareto improving cycle in $\mu^{\mathrm{DA}}$ is $i_2 \to s_4$ and $i_5 \to s_3$. Note that executing this cycle creates justified envy, but not very justified envy: while $i_4$ would be justifiably envious of $i_2$, her priority advantage is $\mathrm{rk}_{s_4}(i_4) - \mathrm{rk}_{s_4}(i_2) = 1 \leq k$. Since this cycle only generates allowable priority violations, TP executes this cycle to obtain its final matching, $\mu^{\mathrm{TP}}$. Note that $\mu^{\mathrm{TP}}$ is free of very justified envy and Pareto efficient (thus, constrained efficient), while
$\mu^{\mathrm{DA}}$ and $\mu^{1\text{-}\mathrm{TA}}$ are both inefficient. 
\end{example}

\section{Simulations}\label{section4}

This section presents simulation evidence of the performance of $k$-TA in random matching markets. Each simulated market consists of $100$ agents and $100$ unit-capacity objects, with each agent's preferences and each object's priorities drawn independently and uniformly at random. Outcomes are averaged over $1{,}000$ independent draws, and are reported for five different threshold levels: $0$-TA (DA), $25$-TA, $50$-TA, $75$-TA, and $99$-TA (IA). 

Before proceeding, two comments are in order. First, in interpreting the output of these simulations, preferences are assumed to be reported truthfully, as is often done when considering mechanisms that are not strategyproof (e.g. \cite{ortega2023cost}). Indeed, understanding how any mechanism performs under truth-telling seems particularly relevant in light of recent experimental evidence suggesting that a large proportion of participants reveal their preferences truthfully irrespective of strategyproofness, with truth-telling rates sometimes higher under non-strategyproof mechanisms than under strategyproof ones (e.g. \cite{pais2008school, cerrone2024school}). Therefore, this exercise focuses solely on fairness and assignment ranks and abstracts from the issue of manipulability documented in \hyperref[prop3]{Proposition 3}.

Second, the primary purpose of these simulations is to highlight how relaxations to justified envy (through increases in $k$) can lead to improvements in assignment ranks under different members of $k$-TA. As established in \hyperref[prop5]{Proposition 5}, however, $k$-TA may sometimes leave some Pareto improvements on the table. In this sense, k-TA is purely meant as an operationally simple implementation of the elimination of very justified envy, rather than an optimal one. Accordingly, the findings below underestimate the assignment rank gains possible from allowing small priority violations. 

\paragraph{Intensity of justified envy.} \hyperref[fig1]{Figure 1} reports the number of justified envy cases grouped by the envious agent's priority advantage at her desired object. In line with \hyperref[prop2]{Proposition 2}, each $k$-TA mechanism admits justified envy cases with priority differences of at most $k$ ranks: $0$-TA (DA) produces no justified envy; $25$-TA produces justified envy within the $1$--$25$ rank band (with an average priority difference of $11$ ranks); $99$-TA (IA) features justified envy in the $1$--$99$ band (with an average priority difference of $31$ ranks). The number of cases also grows with $k$, with an average of $0, 47, 83, 108$ and $112$ cases under $0$-TA, $25$-TA, $50$-TA, $75$-TA and $99$-TA, respectively. 

\begin{figure}[htbp]
    \centering
    \includegraphics[width=0.9\linewidth]{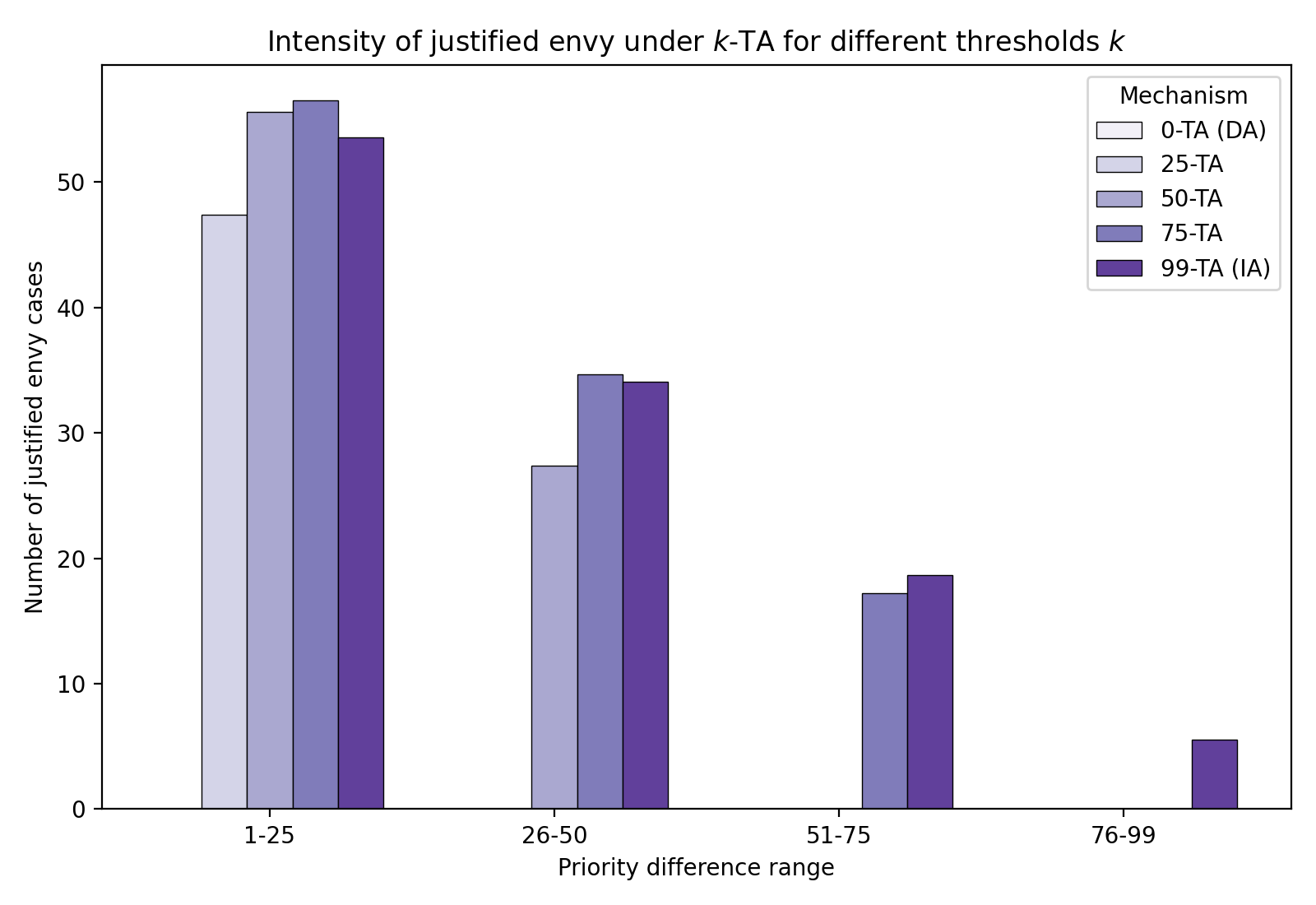}
    \caption{Intensity of justified envy under $k$-TA for $k \in \{0, 25, 50, 75, 99\}$. Results are averaged over $1{,}000$ independently simulated markets, each consisting of $100$ agents and $100$ unit-capacity objects, with preferences and priorities drawn independently and uniformly at random. Each case of justified envy is categorized by the envious agent's priority advantage over the agent she envies, binned into ranges $1$--$25$, $26$--$50$, $51$--$75$, and $76$--$99$.}
    \label{fig1}
\end{figure}

\paragraph{Assignment ranks.} \hyperref[fig2]{Figure 2} reports the distribution of assignment ranks (where an assignment of rank $r$ means the agent was assigned her $r$-th most preferred object). Raising the value of $k$ shifts the entire distribution toward lower-ranked (i.e. better) assignments. Consistent with prior work, the number of first-choice assignments is three times higher under IA ($63$ assignments) than under DA ($21$ assignments), while the number of poorly-ranked assignments (fifth-choice or worse) falls from about $40$ under DA to $16$ under IA. Interestingly, most improvements in the number of first-choice assignments are concentrated at lower values of $k$. For example, moving from $0$-TA (DA) to $25$-TA increases the number of first choice assignments from $21$ to $38$, bridging 40\% of the gap in first-choice assignments between DA and IA. By contrast, going from $75$-TA to $99$-TA (IA) only leads to $3$ additional first-choice assignments. 

\begin{figure}[htbp]
    \centering
    \includegraphics[width=0.9\linewidth]{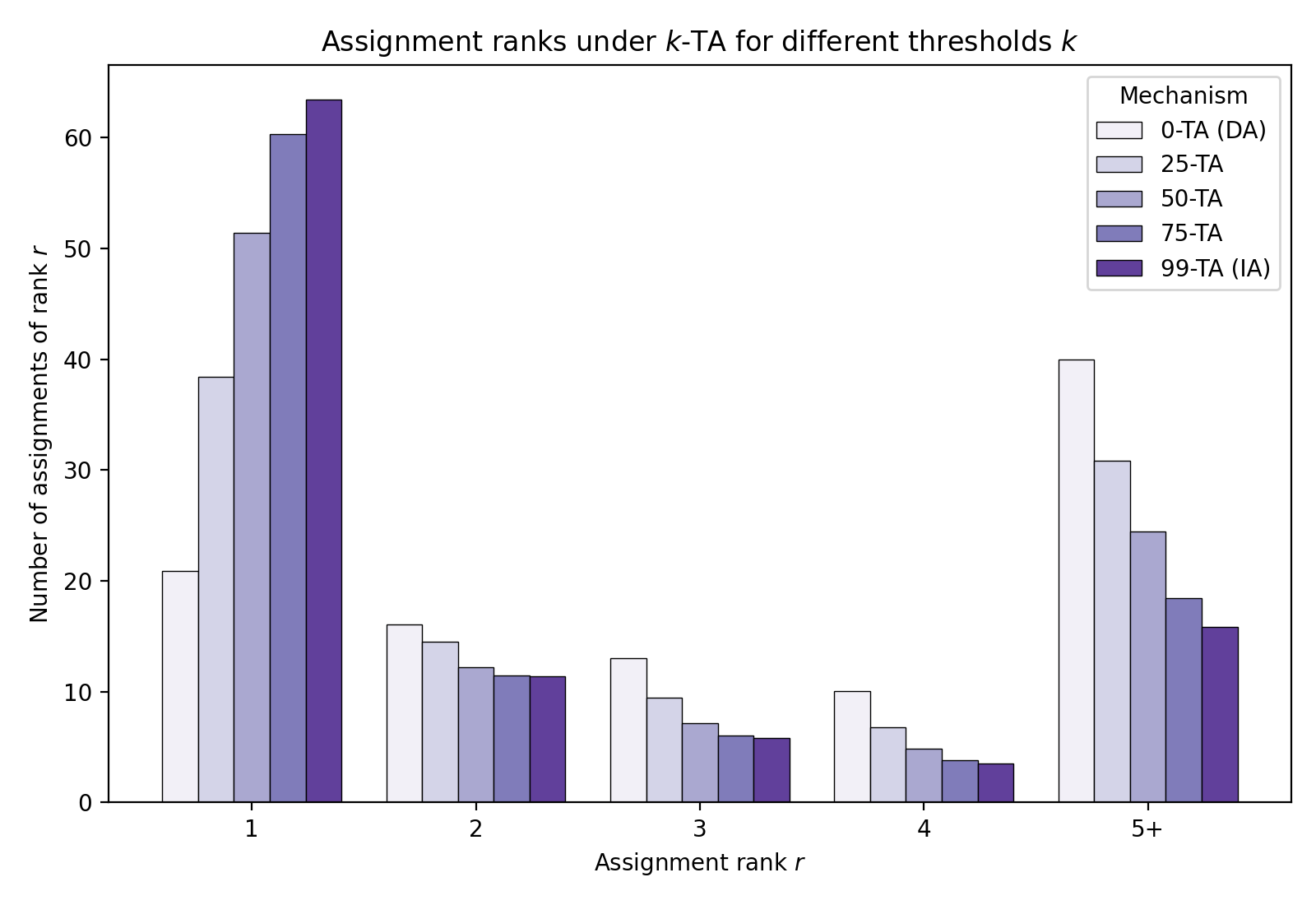}
    \caption{Distribution of assignment ranks under $k$-TA for $k \in \{0, 25, 50, 75, 99\}$. Results are averaged over $1{,}000$ independently simulated markets, each consisting of $100$ agents and $100$ unit-capacity objects, with preferences and priorities drawn independently and uniformly at random. An agent has assignment rank $r$ if the agent is matched to her $r$-th most preferred object.}
    \label{fig2}
\end{figure}

\paragraph{Tradeoff.} Taken together, the two figures display an important tradeoff in the choice of threshold $k$. A larger threshold improves the rank distribution of agents' assignments but permits both more numerous and more intense priority violations. On both margins, the early increments of $k$ appear to be relatively attractive: they deliver the bulk of the first-choice gains while keeping every violation within a narrow band of low-intensity envy. Later increments are comparatively costly, buying few additional first-choice assignments while opening the door to more severe priority violations. Overall, these findings suggest that allowing ``moderately'' justified envy can deliver substantial rank improvements with limited downsides on fairness.   

\section{Conclusion}\label{section5}

This paper analyzes the intensity of priority violations in matching markets such as school choice. I generalize the concept of justified envy to that of \emph{very justified envy}, which requires that an envious agent's priority be sufficiently larger than that of a currently matched one. While allowing bounded priority violations is usually not enough to reconcile Pareto efficiency with fairness, a simple generalization of Deferred Acceptance and Immediate Acceptance is shown to always select a matching which is free of very justified envy. Several properties of the proposed mechanism are theoretically established, and its empirical performance is assessed in simulated random matching markets. 

The current framework also raises open questions. In particular, a growing line of research is exploring how much justified envy emerges under standard efficient mechanisms like Top Trading Cycles (TTC), Serial Dictatorship (SD), and the Efficiency Adjusted Deferred Acceptance (EADA) \citep{abdulkadiroglu2020efficiency, abdulkadiroglu2020efficient, dougan2022robust, afacan2025improving,
kwon2026justified, hamdan2026making}, yet little is known about the severity of justified envy across these mechanisms. Following \cite{abdulkadiroglu2020efficiency} and \cite{dougan2022robust}, is TTC very-justified-envy-minimal among efficient and strategyproof mechanisms in one-to-one matching? Similarly, can one extend the findings in \cite{kwon2026justified} to show that EADA minimizes very justified envy among efficient mechanisms? These questions are left for future work.

\newpage
\singlespacing
\bibliographystyle{plainnat}
\bibliography{bib}

\end{document}